\pgfplotsset{compat=1.14}
\newcolumntype{R}{>{\raggedleft\arraybackslash}X}
\newcolumntype{L}{>{\raggedright\arraybackslash}X}
\newcolumntype{C}{>{\centering\arraybackslash}X}
\newcolumntype{A}{>{\columncolor{gray!25}}C}
\newcolumntype{a}{>{\columncolor{gray!25}}c}
\newlength{\tablen}
\newcolumntype{.}{D{.}{.}{-1}}
\renewcommand\p@subfigure{\arabic{figure}.}
\renewcommand\p@subtable{\arabic{table}.}
\setlist[itemize]{leftmargin=2.5\parindent}
\setlist[enumerate]{leftmargin=2.5\parindent}
\theoremstyle{plain}
\newtheorem{corollary}{Corollary}[section]
\newtheorem{lemma}{Lemma}[section]
\newtheorem{proposition}{Proposition}[section]
\newtheorem{theorem}{Theorem}[section]
\theoremstyle{definition}
\newtheorem{axiom}{Axiom}[section]
\newtheorem{definition}{Definition}[section]
\newtheorem{example}{Example}[section]
\theoremstyle{remark}
\newtheorem{notation}{Notation}[section]
\newtheorem{remark}{Remark}[section]
\def\keywords{\vspace{.5em} 
{\noindent \textit{Keywords}:\,}}
\def\JEL{\vspace{.5em} 
{\noindent \textbf{\emph{JEL} classification number}:\,}}
\def\AMS{\vspace{.5em} 
{\noindent \textbf{\emph{MSC} class}:\,}}
\author{\href{https://sites.google.com/site/laszlocsato87}{L\'aszl\'o Csat\'o}\thanks{~e-mail: laszlo.csato@uni-corvinus.hu} }
\affil{Institute for Computer Science and Control, Hungarian Academy of Sciences (MTA SZTAKI) \\
Laboratory on Engineering and Management Intelligence, Research Group of Operations Research and Decision Systems}
\affil{Corvinus University of Budapest (BCE) \\
Department of Operations Research and Actuarial Sciences}
\affil{Budapest, Hungary}
\title{Some impossibilities of ranking in generalized tournaments}
\date{\today}
\begin{document}

\maketitle

\begin{abstract}

In a generalized tournament, players may have an arbitrary number of matches against each other and the outcome of the games is measured on a cardinal scale with a lower and upper bound. An axiomatic approach is applied to the problem of ranking the competitors. Self-consistency requires assigning the same rank for players with equivalent results, while a player showing an obviously better performance than another should be ranked strictly higher. According to order preservation, if two players have the same pairwise ranking in two tournaments where the same players have played the same number of matches, then their pairwise ranking is not allowed to change in the aggregated tournament.
We reveal that these two properties cannot be satisfied simultaneously on this universal domain.

\JEL{C44, D71}

\AMS{91A80, 91B14}

\keywords{tournament ranking; paired comparison; axiomatic approach; impossibility}
\end{abstract}

\section{Introduction} \label{Sec1}

This paper addresses the problem of tournament ranking when players may have played an arbitrary number of matches against each other, from an axiomatic point of view. For instance, the matches among top tennis players lead to a set of similar data: \emph{Andre Agassi} has played 14 matches with \emph{Boris Becker}, but he has never played against \emph{Bj\"orn Borg} \citep{BozokiCsatoTemesi2016}.
To be more specific, we show the incompatibility of some natural properties.
Impossibility theorems are well-known in the classical theory of social choice \citep{Arrow1950, Gibbard1973, Satterthwaite1975}, but our setting has a crucial difference: the set of agents and the set of alternatives coincide, therefore the transitive effects of 'voting' should be considered \citep{AltmanTennenholtz2008}.
We also allow for cardinal and incomplete preferences as well as ties in the ranking derived.

Several characterizations of ranking methods have been suggested in the literature by providing a set of properties such that they uniquely determine a given method \citep{Rubinstein1980, Bouyssou1992, BouyssouPerny1992, vandenBrinkGilles2003, vandenBrinkGilles2009, SlutzkiVolij2005, SlutzkiVolij2006, Kitti2016}. There are some excellent axiomatic analyses, too \citep{ChebotarevShamis1998a, Gonzalez-DiazHendrickxLohmann2013}.

However, apart from \citet{Csato2018f}, we know only one work discussing impossibility results for ranking the nodes of a directed graph \citep{AltmanTennenholtz2008}, a domain covered by our concept of generalized tournament. We think these theorems are indispensable for a clear understanding of the axiomatic framework. For example, \citet{Gonzalez-DiazHendrickxLohmann2013} have found that most ranking methods violate an axiom called order preservation, but it is not known whether this negative result is caused by a theoretical impossibility or it is only due to some hidden features of the procedures that have been considered.

It is especially a relevant issue because of the increasing popularity of sports rankings \citep{LangvilleMeyer2012}, which is, in a sense, not an entirely new phenomenon, since sports tournaments have motivated some classical works of social choice and voting theory \citep{Landau1895, Zermelo1929, Wei1952}.
For instance, the ranking of tennis players has been addressed from at least three perspectives, with the use of methods from multicriteria decision-making \citep{BozokiCsatoTemesi2016}, network analysis \citep{Radicchi2011}, or statistics \citep{BakerMcHale2014, BakerMcHale2017}. Consequently, the axiomatic approach can be fruitful in the choice of an appropriate sports ranking method. This issue has been discussed in some recent works \citep{Berker2014, Pauly2014, Csato2017d, Csato2018m, Csato2018h, Csato2018j, Csato2018i, Csato2018b, Csato2018l, DagaevSonin2017, Vazirietal2018, Vong2017}, but there is a great scope for future research.



For this purpose, we will place two properties, imported from the social choice literature, in the centre of the discussion.
\emph{Self-consistency} \citep{ChebotarevShamis1997a} requires assigning the same rank for players with equivalent results, furthermore, a player showing an obviously better performance than another should be ranked strictly higher.
\emph{Order preservation}\footnote{~The term order preservation may be a bit misleading since it can suggest that the sequence of matches does not influence the rankings (see \citet[Property~III]{Vazirietal2018}). This requirement obviously holds in our setting.} \citep{Gonzalez-DiazHendrickxLohmann2013} excludes the possibility of rank reversal by demanding the preservation of players' pairwise ranking when two tournaments, where the same players have played the same number of matches, are aggregated. In other words, it is not allowed that player $A$ is judged better both in the first and second half of the season than player $B$, but ranked lower on the basis of the whole season.

Our main result proves the incompatibility of self-consistency and order preservation. This finding gives a theoretical foundation for the observation of \citet{Gonzalez-DiazHendrickxLohmann2013} that most ranking methods do not satisfy order preservation.
Another important message of the paper is that prospective users cannot avoid to take similar impossibilities into account and justify the choice between the properties involved.

The study is structured as follows.
Section~\ref{Sec2} presents the notion of ranking problem and scoring methods.
Section~\ref{Sec3} introduces the property called self-consistency and proves that one type of scoring methods cannot satisfy it. 
Section~\ref{Sec4} defines (strong) order preservation besides some other properties, addresses the compatibility of the axioms and derives a negative result by opposing self-consistency and order preservation.
Section~\ref{Sec5} summarizes our main findings.

\section{The ranking problem and scoring methods} \label{Sec2}

Consider a \emph{set of players} $N = \{ X_1,X_2, \dots, X_n \}$, $n \in \mathbb{N}_+$ and a series of \emph{tournament matrices} $T^{(1)}$, $T^{(2)}$, \dots, $T^{(m)}$ containing information on the paired comparisons of the players.
Their entries are given such that $t_{ij}^{(p)} + t_{ji}^{(p)} = 1$ if players $X_i$ and $X_j$ have played in round $p$ ($1 \leq p \leq m$) and $t_{ij}^{(p)} + t_{ji}^{(p)} = 0$ if they have not played against each other in round $p$.
The simplest definition can be $t_{ij}^{(p)} = 1$ (implying $t_{ji}^{(p)} = 0$) if player $X_i$ has defeated player $X_j$, and $t_{ij}^{(p)} = 0$ (implying $t_{ji}^{(p)} = 1$) if player $X_i$ has lost against player $X_j$ in round $p$. A draw can be represented by $t_{ij}^{(p)} = t_{ji}^{(p)} = 0.5$. The entries may reflect the scores of the players, or other features of the match (e.g. an overtime win has less value than a normal time win), too. 

The tuple $\left( N,T^{(1)}, T^{(2)}, \dots, T^{(m)} \right)$, denoted shortly by $(N,\mathbf{T})$, is called a \emph{general ranking problem}.
The set of general ranking problems with $n$ players ($|N| = n$) is denoted by $\mathcal{T}^n$.

The \emph{aggregated tournament matrix} $A = \sum_{p=1}^m T^{(p)} = \left[ a_{ij} \right] \in \mathbb{R}^{n \times n}$ combines the results of all rounds of the competition.

The pair $(N,A)$ is called a \emph{ranking problem}.
The set of ranking problems with $n$ players ($|N| = n$) is denoted by $\mathcal{R}^n$.
Note that every ranking problem can be associated with several general ranking problems, in this sense, ranking problem is a narrower notion. 

Let $(N,A),(N,A') \in \mathcal{R}^n$ be two ranking problems with the same player set $N$. The \emph{sum} of these ranking problems is $(N,A+A') \in \mathcal{R}^n$.
For example, the ranking problems can contain the results of matches in the first and second half of the season, respectively.

Any ranking problem $(N,A)$ has a skew-symmetric \emph{results matrix} $R = A - A^\top = \left[ r_{ij} \right] \in \mathbb{R}^{n \times n}$ and a symmetric \emph{matches matrix} $M = A + A^\top = \left[ m_{ij} \right] \in \mathbb{N}^{n \times n}$. $m_{ij}$ is the number of matches between players $X_i$ and $X_j$, whose outcome is given by $r_{ij}$. Matrices $R$ and $M$ also determine the aggregated tournament matrix through $A = (R + M)/2$, so any ranking problem $(N,A) \in \mathcal{R}^n$ can be denoted analogously by $(N,R,M)$ with the restriction $|r_{ij}| \leq m_{ij}$ for all $X_i,X_j \in N$.
Despite description with results and matches matrices is not parsimonious, this notation will turn out to be useful.

A \emph{general scoring method} is a function $g:\mathcal{T}^n \to \mathbb{R}^n$. Several procedures have been suggested in the literature, see \citet{ChebotarevShamis1998a} for an overview of them. A special type of general scoring methods is the following.

\begin{definition} \label{Def21}
\emph{Individual scoring method} \citep{ChebotarevShamis1999}: A general scoring method $g:\mathcal{T}^n \to \mathbb{R}^n$ is called \emph{individual scoring method} if it is based on individual scores, that is, there exist functions $\phi$ and $\delta$ such that for any general ranking problem $(N,\mathbf{T}) \in \mathcal{T}^n$, the corresponding score vector $\mathbf{s} = g(N,\mathbf{T})$ can be expressed as $\mathbf{s} = \delta(\mathbf{s}^{(1)},\mathbf{s}^{(2)}, \dots, \mathbf{s}^{(m)})$, where the partial score vectors $\mathbf{s}^{(p)} = \phi(N,T^{(p)})$ depend solely on the tournament matrix $T^{(p)}$ of round $p$ for all $p = 1,2, \dots, m$.
\end{definition}

A \emph{scoring method} is a function $f:\mathcal{R}^n \to \mathbb{R}^n$. Any scoring method can also be regarded as a general scoring method -- by using the aggregated tournament matrix instead of the whole series of tournament matrices --, therefore some articles only consider scoring methods \citep{Kitti2016, SlutzkiVolij2005}. \citet{Gonzalez-DiazHendrickxLohmann2013} give a thorough axiomatic analysis of certain scoring methods.

In other words, scoring methods initially aggregate the tournament matrices and then rank the players by their scores, while individual scoring methods first give scores to the players in each round and then aggregate them.

\section{An argument against the use of individual scoring methods} \label{Sec3}

In this section, some properties of general scoring methods are presented, which will highlight an important failure of individual scoring methods.

\subsection{Universal invariance axioms} \label{Sec31}

\begin{axiom} \label{Axiom31}
\emph{Anonymity} ($ANO$): 
Let $(N,\mathbf{T}) \in \mathcal{T}^n$ be a general ranking problem, $\sigma: \{ 1,2, \dots, m \} \rightarrow \{ 1,2, \dots, m \}$ be a permutation on the set of rounds, and $\sigma(N,\mathbf{T}) \in \mathcal{T}^n$ be the ranking problem obtained from $(N,\mathbf{T})$ by permutation $\sigma$.
General scoring method $g: \mathcal{T}^n \to \mathbb{R}^n$ is \emph{anonymous} if $g_i (N,\mathbf{T}) = g_i \left( \sigma(N,\mathbf{T}) \right)$ for all $X_i \in N$.
\end{axiom}

Anonymity implies that any reindexing of the rounds (tournament matrices) preserves the scores of the players.

\begin{axiom} \label{Axiom32}
\emph{Neutrality} ($NEU$): 
Let $(N,\mathbf{T}) \in \mathcal{T}^n$ be a general ranking problem, $\sigma: N \rightarrow N$ be a permutation on the set of players, and $(\sigma(N),\mathbf{T}) \in \mathcal{T}^n$ be the ranking problem obtained from $(N,\mathbf{T})$ by permutation $\sigma$.
General scoring method $g: \mathcal{T}^n \to \mathbb{R}^n$ is \emph{neutral} if $g_i(N,\mathbf{T}) = g_{\sigma(i)} (\sigma(N),\mathbf{T})$ for all $X_i \in N$.
\end{axiom}

Neutrality means that the scores are independent of the labelling of the players.

\subsection{Self-consistency} \label{Sec32}

Now we want to formulate a further requirement on the ranking of the players by answering the following question: \emph{When is player $X_i$ undeniably better than player $X_j$?}
There are two such plausible cases: (1) if player $X_i$ has achieved better results against the same opponents; (2) if player $X_i$ has achieved the same results against stronger opponents. Consequently, player $X_i$ should also be judged better if he/she has achieved better results against stronger opponents than player $X_j$. Furthermore, since (general) scoring methods allow for ties in the ranking, player $X_i$ should have the same rank as player $X_j$ if he/she has achieved the same results against opponents with the same strength.

In order to apply these principles, both the results and strengths of the players should be measured. Results can be extracted from the tournament matrices $T^{(p)}$. Strengths of the players can be obtained from their scores according to the (general) scoring method used, hence the name of the implied axiom is \emph{self-consistency}.
It has been introduced in \citet{ChebotarevShamis1997a}, and extensively discussed by \citet{Csato2018f}.

\begin{definition} \label{Def31}
\emph{Opponent multiset}:
Let $(N,\mathbf{T}) \in \mathcal{T}^n$ be a general ranking problem. The \emph{opponent multiset}\footnote{~\emph{Multiset} is a generalization of the concept of set allowing for multiple instances of its elements.}
of player $X_i$ is $O_i$, which contains $m_{ij}$ instances of $X_j$.
\end{definition}

Players of the opponent multiset $O_i$ are called the \emph{opponents} of player $X_i$.

\begin{notation} \label{Not31}
Consider the ranking problem $(N,T^{(p)}) \in \mathcal{T}^n$ given by restricting a general ranking problem to its $p$th round. Let $X_i, X_j \in N$ be two different players and $h^{(p)}: O_i^{(p)} \leftrightarrow O_j^{(p)}$ be a one-to-one correspondence between the opponents of $X_i$ and $X_j$ in round $p$, consequently, $|O_i^{(p)}| =  |O_j^{(p)}|$.
Then $\mathfrak{h}^{(p)}: \{k: X_k \in O_i^{(p)} \} \leftrightarrow \{\ell: X_\ell \in O_j^{(p)} \}$ is given by $X_{\mathfrak{h}^{(p)}(k)} = h^{(p)}(X_k)$.
\end{notation}

\begin{axiom} \label{Axiom33}
\emph{Self-consistency} ($SC$) \citep{ChebotarevShamis1997a}:
A general scoring method $g: \mathcal{T}^n \to \mathbb{R}^n$ is called \emph{self-consistent} if the following implication holds for any general ranking problem $(N,\mathbf{T}) \in \mathcal{T}^n$ and for any players $X_i,X_j \in N$:
if there exists a one-to-one mapping $h^{(p)}$ from $O^{(p)}_i$ onto $O^{(p)}_j$ such that $t_{ik}^{(p)} \geq t_{j \mathfrak{h}^{(p)}(k)}^{(p)}$ and $g_k(N,\mathbf{T}) \geq g_{\mathfrak{h}^{(p)}(k)}(N,\mathbf{T})$ for all $p = 1,2, \dots ,m$ and $X_k \in O_i^{(p)}$, then
$f_i(N,R,M) \geq f_{j}(N,R,M)$, furthermore, $f_i(N,R,M) > f_{j}(N,R,M)$ if $t_{ik}^{(p)} > t_{j \mathfrak{h}^{(p)}(k)}^{(p)}$ or $g_k(N,\mathbf{T}) > g_{\mathfrak{h}^{(p)}(k)}(N,\mathbf{T})$ for at least one $1 \leq p \leq m$ and $X_k \in O_i^{(p)}$.
\end{axiom}


\subsection{Individual scoring methods and self-consistency} \label{Sec33}

In this part, it will be proved that an anonymous and neutral individual scoring method cannot satisfy self-consistency, which is a natural fairness requirement, thus it is enough to focus on ranking problems and scoring methods.
For this purpose, the example below will be used.

\begin{figure}[htbp]
\centering
\caption{The general ranking problem of Example~\ref{Examp31}}
\label{Fig31}
  
\begin{subfigure}{.33\textwidth}
  \centering
  \subcaption{$(N,T^{(1)})$}
  \label{Fig31a}
\begin{tikzpicture}[scale=1, auto=center, transform shape, >=triangle 45]
\tikzstyle{every node}=[draw,shape=rectangle]; 
  \node (n1) at (135:2) {$X_1$};
  \node (n2) at (45:2)  {$X_2$};
  \node (n3) at (315:2) {$X_3$};
  \node (n4) at (225:2) {$X_4$};

  \draw [->] (n1) -- (n4);
\end{tikzpicture}
\end{subfigure}
\begin{subfigure}{.33\textwidth}
  \centering
  \subcaption{$(N,T^{(2)})$}
  \label{Fig31b}
\begin{tikzpicture}[scale=1, auto=center, transform shape, >=triangle 45]
\tikzstyle{every node}=[draw,shape=rectangle];
  \node (n1) at (135:2) {$X_1$};
  \node (n2) at (45:2)  {$X_2$};
  \node (n3) at (315:2) {$X_3$};
  \node (n4) at (225:2) {$X_4$};

  \draw (n1) -- (n2);
  \draw (n4) -- (n3);
\end{tikzpicture}
\end{subfigure}
\begin{subfigure}{.33\textwidth}
  \centering
  \subcaption{$(N,\mathbf{T})$}
  \label{Fig31c}
\begin{tikzpicture}[scale=1, auto=center, transform shape, >=triangle 45]
\tikzstyle{every node}=[draw,shape=rectangle];
  \node (n1) at (135:2) {$X_1$};
  \node (n2) at (45:2)  {$X_2$};
  \node (n3) at (315:2) {$X_3$};
  \node (n4) at (225:2) {$X_4$};

  \foreach \from/\to in {n1/n2,n3/n4}
    \draw (\from) -- (\to);
  \draw [->] (n1) -- (n4);
\end{tikzpicture}
\end{subfigure}
\end{figure}
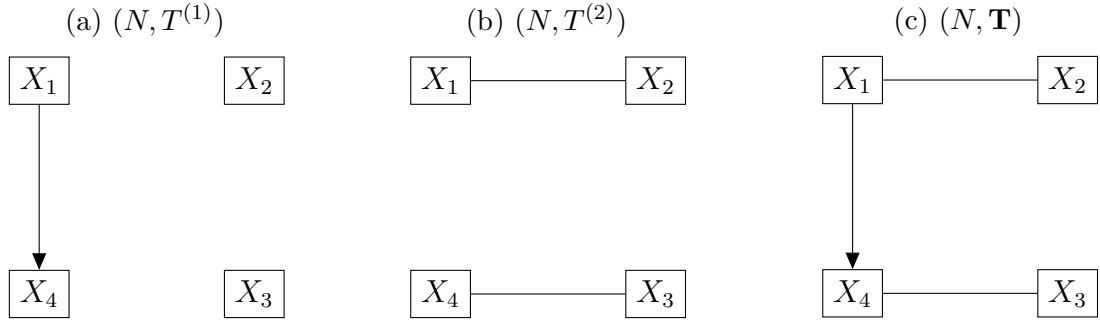

\begin{example} \label{Examp31}
Let $\left( N,T^{(1)},T^{(2)} \right) \in \mathcal{T}^4$ be a general ranking problem describing a tournament with two rounds. 

It is shown in Figure~\ref{Fig31}: a directed edge from node $X_i$ to $X_j$ indicates a win of player $X_i$ over $X_j$ (and a loss of $X_j$ against $X_i$), while an undirected edge from node $X_i$ to $X_j$ represents a drawn match between the two players. This representation will be used in further examples, too.

So, player $X_1$ has defeated $X_4$ in the first round (Figure~\ref{Fig31a}), while players $X_2$ and $X_3$ have not played. In the second round, players $X_1$ and $X_2$, as well as players $X_3$ and $X_4$ have drawn (Figure~\ref{Fig31b}). The whole tournament is shown in Figure~\ref{Fig31c}.
\end{example}

According to the following result, at least one property from the set of $ANO$, $NEU$ and $SC$ will be violated by any individual scoring method. 

\begin{proposition} \label{Prop31}
There exists no anonymous and neutral individual scoring method satisfying self-consistency.
\end{proposition}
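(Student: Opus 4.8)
The plan is to take an arbitrary individual scoring method $g$ that is anonymous and neutral, run it on the two-round problem $(N,\mathbf{T})$ of Example~\ref{Examp31}, and obtain two incompatible verdicts on the pair $X_2,X_3$: the round-separable structure of Definition~\ref{Def21} together with neutrality will force $g_2(N,\mathbf{T})=g_3(N,\mathbf{T})$, while self-consistency will force $g_2(N,\mathbf{T})>g_3(N,\mathbf{T})$.

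\emph{Structural half.} Writing $\mathbf{s}^{(1)}=\phi(N,T^{(1)})$, $\mathbf{s}^{(2)}=\phi(N,T^{(2)})$ and $g(N,\mathbf{T})=\delta(\mathbf{s}^{(1)},\mathbf{s}^{(2)})$, I would first pin down the partial scores. In round $1$ players $X_2,X_3$ are isolated, so $(X_2\,X_3)$ is a symmetry of $T^{(1)}$ and neutrality gives $s^{(1)}_2=s^{(1)}_3$. In round $2$ the two drawn edges $X_1\!-\!X_2$ and $X_3\!-\!X_4$ are preserved by $(X_1\,X_2)$, by $(X_3\,X_4)$ and by $(X_1\,X_3)(X_2\,X_4)$; chaining the resulting equalities gives $s^{(2)}_1=s^{(2)}_2=s^{(2)}_3=s^{(2)}_4$, in particular $s^{(2)}_2=s^{(2)}_3$. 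Hence both arguments of $\delta$ are fixed by the transposition of coordinates $2$ and $3$. Since that transposition is induced by the player permutation $(X_2\,X_3)$, which fixes $T^{(1)}$ and merely relabels $T^{(2)}$ into another matrix on which $\phi$ returns the same (all-equal) round-$2$ vector, the method $g$ produces the identical score vector on $(N,\mathbf{T})$ and on its $(X_2\,X_3)$-relabelling; neutrality applied to this relabelling then identifies the two coordinates, i.e.\ $g_2(N,\mathbf{T})=g_3(N,\mathbf{T})$.

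\emph{Self-consistency half.} Now I feed $g_2=g_3$ into Axiom~\ref{Axiom33}. Comparing $X_1$ with $X_4$, take $X_4\mapsto X_1$ in round $1$ and $X_2\mapsto X_3$ in round $2$; the outcomes give $t^{(1)}_{14}=1>0=t^{(1)}_{41}$ and $t^{(2)}_{12}=0.5=t^{(2)}_{43}$, so the outcome hypothesis holds with a strict inequality. Were $g_4\ge g_1$ to hold as well, then together with $g_2\ge g_3$ the full hypothesis would be satisfied and the strict clause would yield $g_1>g_4$, a contradiction; therefore $g_1>g_4$. Comparing next $X_2$ with $X_3$, take the empty map in round $1$ (both are isolated) and $X_1\mapsto X_4$ in round $2$, where $t^{(2)}_{21}=0.5=t^{(2)}_{34}$; the score condition $g_1>g_4$ now holds strictly, so the strict clause of self-consistency delivers $g_2>g_3$, contradicting $g_2=g_3$ and proving the proposition.

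\emph{Main obstacle.} The only genuinely delicate step is the structural half: passing from neutrality of the \emph{aggregate} rule $g$ to the equality of \emph{per-round} partial scores of symmetric players, since $\delta$ a priori entangles all players' round scores. This is precisely where the individuality of $g$ is indispensable---it lets me permute the players inside a single round while holding the other round's input to $\delta$ fixed, turning a symmetry of one tournament matrix into a symmetry of the whole input to $\delta$. If, as is natural and as the source \citep{ChebotarevShamis1999} suggests, $\phi$ is itself a neutral per-round rule, the needed equalities $s^{(1)}_2=s^{(1)}_3$ and $s^{(2)}_2=s^{(2)}_3$ are immediate; otherwise I would isolate them as a short lemma (symmetric players of a round share a partial score, and players with identical partial-score profiles across all rounds share an aggregate score). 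Note that anonymity is not actually needed for this particular argument---neutrality does all the work---and that the derivation uses only the qualitative weak/strict form of self-consistency, so it is insensitive to how the conclusion of Axiom~\ref{Axiom33} is read on $g$.
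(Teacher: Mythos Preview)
Your approach is essentially the paper's: first use the separable structure together with symmetries of the two rounds to force $g_2(N,\mathbf{T})=g_3(N,\mathbf{T})$, then invoke self-consistency twice to obtain $g_1>g_4$ and subsequently $g_2>g_3$, yielding the contradiction. The paper dispatches the structural half in one line, writing the $i$-th aggregate score as $\delta\bigl(g_i(N,T^{(1)}),g_i(N,T^{(2)})\bigr)$ and using $g_2(N,T^{(p)})=g_3(N,T^{(p)})$ for $p=1,2$ (neutrality applied to each single-round problem, via the automorphism $(X_2\,X_3)$ of $T^{(1)}$ and $(X_1\,X_4)(X_2\,X_3)$ of $T^{(2)}$); this tacitly reads Definition~\ref{Def21} with $\delta$ acting coordinate-wise and with the per-round rule identified with $g$ on one-round inputs. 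You are right that this is the only delicate step and that, under the most literal reading of the definition (vector-valued $\delta$, arbitrary $\phi$), the passage from neutrality of $g$ to equality of partial scores needs an extra argument---the paper simply does not spell it out. Your remark that anonymity plays no role is also accurate: the paper lists $ANO$ among the hypotheses used, but only neutrality is actually exercised in the argument on Example~\ref{Examp31}.
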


\begin{proof}
Let $g: \mathcal{T}^n \to \mathbb{R}^n$ be an anonymous and neutral individual scoring method.
Consider Example~\ref{Examp31}.
$ANO$ and $NEU$ imply that $g_2(N,T^{(1)}) = g_3(N,T^{(1)})$ and $g_2(N,T^{(2)}) = g_3(N,T^{(2)})$, therefore
\begin{equation} \label{eq1}
g_2(N,\mathbf{T}) = \delta \left( g_2(N,T^{(1)}), g_2(N,T^{(2)}) \right) = \delta \left( g_3(N,T^{(1)}), g_3(N,T^{(2)}) \right) = g_3(N,\mathbf{T}).
\end{equation}

Note that $O_1^{(1)} = \{ X_4 \}$, $O_1^{(1)} = \{ X_2 \}$ and $O_4^{(1)} = \{ X_1 \}$, $O_4^{(2)} = \{ X_3 \}$.
Take the one-to-one correspondences $h_{14}^{(1)}: O_1^{(1)} \leftrightarrow O_4^{(1)}$ such that $h_{14}^{(1)}(X_4)=X_1$ and $h_{14}^{(2)}: O_1^{(2)} \leftrightarrow O_4^{(2)}$ such that $h_{14}^{(2)}(X_2)=X_3$. Now $t_{12}^{(2)} = t_{43}^{(2)}$ since the corresponding matches resulted in draws. Furthermore, $t_{14}^{(1)} \neq t_{41}^{(1)}$ since the value of a win and a loss should be different.
It can be assumed without loss of generality that $t_{14}^{(1)} > t_{41}^{(1)}$.
Suppose that $g_1(N,\mathbf{T}) \leq g_4(N,\mathbf{T})$. Then players $X_1$ and $X_4$ have a draw against a player with the same strength ($X_2$ and $X_3$, respectively), but $X_1$ has defeated $X_4$, so it has a better result against a not weaker opponent. Therefore, self-consistency (Axiom~\ref{Axiom33}) implies $g_1(N,\mathbf{T}) > g_4(N,\mathbf{T})$, which is a contradiction, thus $g_1(N,\mathbf{T}) > g_4(N,\mathbf{T})$ holds.
 
However, $O_2^{(1)} = \emptyset$, $O_2^{(2)} = \{ X_1 \}$ and $O_3^{(1)} = \emptyset$, $O_3^{(2)} = \{ X_4 \}$. Consider the unique one-to-one correspondence $h_{14}^{(2)}: O_2^{(2)} \leftrightarrow O_3^{(2)}$, which -- together with $t_{21}^{(2)} = t_{34}^{(2)}$ (the two draws should be represented by the same number) and $g_1(N,\mathbf{T}) > g_4(N,\mathbf{T})$ -- leads to $g_2(N,\mathbf{T}) > g_3(N,\mathbf{T})$ because player $X_2$ has achieved the same result against a stronger opponent than player $X_3$. In other words, $SC$ requires the draw of $X_2$ to be more valuable than the draw of $X_3$, but it cannot be reflected by any individual scoring method $g$ according to \eqref{eq1}.
\end{proof}

\section{The case of ranking problems and scoring methods} \label{Sec4}

According to Proposition~\ref{Prop31}, only the procedure underlying scoring methods can be compatible with self-consistency.
Therefore, this section will focus on scoring methods.

\subsection{Axioms of invariance with respect to the results matrix} \label{Sec41}

Let $O \in \mathbb{R}^{n \times n}$ be the matrix with all of its entries being zero.

\begin{axiom} \label{Axiom41}
\emph{Symmetry} ($SYM$) \citep{Gonzalez-DiazHendrickxLohmann2013}:
Let $(N,R,M) \in \mathcal{R}^n$ be a ranking problem such that $R=O$.
Scoring method $f: \mathcal{R}^n \to \mathbb{R}^n$ is \emph{symmetric} if $f_i(N,R,M) = f_j(N,R,M)$ for all $X_i, X_j \in N$.
\end{axiom}

According to symmetry, if all paired comparisons (but not necessarily all matches in each round) between the players result in a draw, then all players will have the same score.

\begin{axiom} \label{Axiom42}
\emph{Inversion} ($INV$) \citep{ChebotarevShamis1998a}:
Let $(N,R,M) \in \mathcal{R}^n$ be a ranking problem.
Scoring method $f: \mathcal{R}^n \to \mathbb{R}^n$ is \emph{invertible} if $f_i(N,R,M) \geq f_j(N,R,M) \iff f_i(N,-R,M) \leq f_j(N,-R,M)$ for all $X_i, X_j \in N$.
\end{axiom}

Inversion means that taking the opposite of all results changes the ranking accordingly. It establishes a uniform treatment of victories and losses.

\begin{corollary} \label{Col41}
Let $f: \mathcal{R}^n \to \mathbb{R}^n$ be a scoring method satisfying $INV$. Then for all $X_i, X_j \in N$: $f_i(N,R,M) > f_j(N,R,M) \iff f_i(N,-R,M) < f_j(N,-R,M)$.
\end{corollary}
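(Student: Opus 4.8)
The plan is to obtain the strict-inequality equivalence directly from the weak-inequality biconditional that defines $INV$, using only the elementary fact that for real numbers the negation of $a \geq b$ is $a < b$. Since $INV$ is required to hold for \emph{every} ordered pair of players, I would invoke it not on $(X_i, X_j)$ but on the swapped pair $(X_j, X_i)$, which yields
\[
f_j(N,R,M) \geq f_i(N,R,M) \iff f_j(N,-R,M) \leq f_i(N,-R,M).
\]

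Next I would negate both sides of this biconditional. The negation of a biconditional is the biconditional of the negated statements, so the equivalence is preserved. On the left, $\neg\big(f_j(N,R,M) \geq f_i(N,R,M)\big)$ is exactly $f_i(N,R,M) > f_j(N,R,M)$; on the right, $\neg\big(f_j(N,-R,M) \leq f_i(N,-R,M)\big)$ is exactly $f_i(N,-R,M) < f_j(N,-R,M)$. This produces precisely the claimed equivalence, and since $X_i, X_j$ were arbitrary, the statement holds for all players.

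There is essentially no hard step here: the result is a purely logical strengthening of $INV$, and the only point requiring care is the bookkeeping of which ordered pair the axiom is invoked on. Invoking $INV$ on $(X_i, X_j)$ directly would deliver the weak inequalities in the orientation $f_i(N,R,M) \geq f_j(N,R,M) \iff f_i(N,-R,M) \leq f_j(N,-R,M)$, forcing a two-line contrapositive argument to recover strictness, whereas invoking it on $(X_j, X_i)$ delivers the conclusion after a single negation. An alternative, equally short route would be to write $a > b$ as the conjunction of $a \geq b$ and $\neg(b \geq a)$ and translate each conjunct through $INV$, but the single-negation argument above is the most economical.
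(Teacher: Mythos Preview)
Your argument is correct and is exactly the elementary logical derivation the paper has in mind: the paper states the corollary without proof, treating it as an immediate consequence of applying $INV$ to the pair $(X_j,X_i)$ and negating. There is nothing to add.
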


The following result has been already mentioned by \citet[p.~150]{Gonzalez-DiazHendrickxLohmann2013}.

\begin{corollary} \label{Col42}
$INV$ implies $SYM$.
\end{corollary}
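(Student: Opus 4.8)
The plan is to exploit the observation that the zero matrix $O$ is a fixed point of the inversion map $R \mapsto -R$, so any ranking problem with $R = O$ is its own inverse. This collapses the inversion axiom into a self-referential condition on a single score vector, after which equality of all scores is forced automatically.

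First I would fix an arbitrary ranking problem $(N,R,M) \in \mathcal{R}^n$ with $R = O$, as demanded by the hypothesis of $SYM$ (Axiom~\ref{Axiom41}). Since $-O = O$, the inverted problem $(N,-R,M)$ is literally identical to $(N,R,M)$, and in particular $f_i(N,-R,M) = f_i(N,R,M)$ for every $X_i \in N$. This identification is the only genuine idea in the argument; everything else is formal.

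Next I would invoke Corollary~\ref{Col41}, which is already available from $INV$. For any two players $X_i, X_j \in N$ it yields
\[ f_i(N,O,M) > f_j(N,O,M) \iff f_i(N,-O,M) < f_j(N,-O,M), \]
and substituting $-O = O$ turns the right-hand side into $f_i(N,O,M) < f_j(N,O,M)$. The two strict inequalities $f_i(N,O,M) > f_j(N,O,M)$ and $f_i(N,O,M) < f_j(N,O,M)$ are mutually exclusive, so the biconditional can hold only if neither side is true; that is, $f_i(N,O,M) = f_j(N,O,M)$. As $X_i$ and $X_j$ were arbitrary, all players receive the same score, which is precisely the conclusion of $SYM$.

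There is essentially no obstacle here: once one notices that negating an all-draw results matrix leaves it unchanged, Corollary~\ref{Col41} forces equality of the scores. One could equally argue straight from the biconditional of $INV$ in Axiom~\ref{Axiom42}, which for $R = O$ reads $f_i(N,O,M) \geq f_j(N,O,M) \iff f_i(N,O,M) \leq f_j(N,O,M)$; since at least one of $f_i(N,O,M) \geq f_j(N,O,M)$ and $f_i(N,O,M) \leq f_j(N,O,M)$ always holds for real numbers, both hold, giving $f_i(N,O,M) = f_j(N,O,M)$. Either route gives the claim immediately, so the only thing to get right is the fixed-point observation.
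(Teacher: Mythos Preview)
Your argument is correct: the key observation that $-O = O$ makes $(N,O,M)$ its own inverse, after which either Corollary~\ref{Col41} or the defining biconditional of $INV$ immediately forces $f_i(N,O,M) = f_j(N,O,M)$ for all $X_i,X_j \in N$. Both routes you sketch are valid and essentially equivalent.

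There is nothing to compare against here: the paper does not supply its own proof of Corollary~\ref{Col42}, but merely attributes the observation to \citet[p.~150]{Gonzalez-DiazHendrickxLohmann2013}. Your fixed-point argument is the natural way to verify the claim and would serve perfectly well as the omitted proof.
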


It seems to be difficult to argue against symmetry. However, scoring methods based on right eigenvectors \citep{Wei1952, SlutzkiVolij2005, SlutzkiVolij2006, Kitti2016} violate inversion. 

\subsection{Properties of independence} \label{Sec42}

The next axiom deals with the effects of certain changes in the aggregated tournament matrix $A$.

\begin{axiom} \label{Axiom43}
\emph{Independence of irrelevant matches} ($IIM$) \citep{Gonzalez-DiazHendrickxLohmann2013}:
Let $(N,A),(N,A') \in \mathcal{R}^n$ be two ranking problems and $X_i,X_j,X_k, X_\ell \in N$ be four different players such that $(N,A)$ and $(N,A')$ are identical but $a_{k \ell} \neq a'_{k \ell}$.
Scoring method $f: \mathcal{R}^n \to \mathbb{R}^n$ is called \emph{independent of irrelevant matches} if $f_i(N,A) \geq f_j(N,A) \Rightarrow f_i(N,A') \geq f_j(N,A')$.
\end{axiom}

$IIM$ means that 'remote' matches -- not involving players $X_i$ and $X_j$ -- do not affect the pairwise ranking of players $X_i$ and $X_j$.



Independence of irrelevant matches seems to be a powerful property. \citet{Gonzalez-DiazHendrickxLohmann2013} state that '\emph{when players have different opponents (or face opponents with different intensities), $IIM$ is a property one would rather not have}'. \citet{Csato2018f} argues on an axiomatic basis against $IIM$.

The rounds of a given tournament can be grouped arbitrarily. Therefore, the following property makes much sense. 

\begin{axiom} \label{Axiom44}
\emph{Order preservation} ($OP$) \citep{Gonzalez-DiazHendrickxLohmann2013}:
Let $(N,A),(N,A') \in \mathcal{R}^n$ be two ranking problems where all players have played $m$ matches and $X_i, X_j \in N$ be two different players.
Let $f: \mathcal{R}^n \to \mathbb{R}^n$ be a scoring method such that $f_i(N,A) \geq f_j(N,A)$ and $f_i(N,A') \geq f_j(N,A')$.\footnote{~\citet{Gonzalez-DiazHendrickxLohmann2013} formally introduce a stronger version of this axiom since only $X_i$ and $X_j$ should have the same number of matches in the two ranking problems. However, in the counterexample of \citet{Gonzalez-DiazHendrickxLohmann2013}, which shows the violation of $OP$ by several ranking methods, all players have played the same number of matches.}
$f$ satisfies \emph{order preservation} if $f_i(N,A+A') \geq f_j(N,A+A')$, furthermore, $f_i(N,A+A') > f_j(N,A+A')$ if $f_i(N,A) > f_j(N,A)$ or $f_i(N,A') > f_j(N,A')$.
\end{axiom}

$OP$ is a relatively restricted version of combining ranking problems, which implies that if player $X_i$ is not worse than player $X_j$ on the basis of some rounds as well as on the basis of another set of rounds such that all players have played in each round (so they have played the same number of matches altogether), then this pairwise ranking should hold after the two distinct set of rounds are considered jointly.

One can consider a stronger version of order preservation, too.

\begin{axiom} \label{Axiom45}
\emph{Strong order preservation} ($SOP$) \citep{vandenBrinkGilles2009}:
Let $(N,A),(N,A') \in \mathcal{R}^n$ be two ranking problems and $X_i, X_j \in N$ be two players.
Let $f: \mathcal{R}^n \to \mathbb{R}^n$ be a scoring method such that $f_i(N,A) \geq f_j(N,A)$ and $f_i(N,A') \geq f_j(N,A')$.
$f$ satisfies \emph{strong order preservation} if $f_i(N,A+A') \geq f_j(N,A+A')$, furthermore, $f_i(N,A+A') > f_j(N,A+A')$ if $f_i(N,A) > f_j(N,A)$ or $f_i(N,A') > f_j(N,A')$.
\end{axiom}

In contrast to order preservation, $SOP$ does not contain any restriction on the number of matches of the players in the ranking problems to be aggregated.

\begin{corollary} \label{Col43}
$SOP$ implies $OP$.
\end{corollary}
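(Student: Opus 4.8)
The plan is to observe that order preservation is nothing but strong order preservation restricted to a narrower class of instances, so the implication reduces to a simple domain-inclusion argument. Comparing Axiom~\ref{Axiom44} and Axiom~\ref{Axiom45} term by term, both impose exactly the same hypothesis on the scores ($f_i(N,A) \geq f_j(N,A)$ together with $f_i(N,A') \geq f_j(N,A')$) and exactly the same conclusion ($f_i(N,A+A') \geq f_j(N,A+A')$, with strict inequality whenever one of the two input comparisons is strict). The single difference is that $OP$ additionally requires the two ranking problems $(N,A)$ and $(N,A')$ to be of the special form in which every player has played $m$ matches, whereas $SOP$ places no such restriction on the inputs.

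First I would fix a scoring method $f$ satisfying $SOP$ and take an arbitrary instance meeting the hypotheses of $OP$: two ranking problems $(N,A),(N,A') \in \mathcal{R}^n$ in which all players have played $m$ matches, two distinct players $X_i, X_j \in N$, and the assumed inequalities $f_i(N,A) \geq f_j(N,A)$ and $f_i(N,A') \geq f_j(N,A')$. Since this is in particular a pair of ranking problems in $\mathcal{R}^n$ carrying the required score inequalities, it also satisfies the weaker hypothesis of $SOP$. Applying $SOP$ to this instance yields $f_i(N,A+A') \geq f_j(N,A+A')$, as well as the strict conclusion $f_i(N,A+A') > f_j(N,A+A')$ whenever $f_i(N,A) > f_j(N,A)$ or $f_i(N,A') > f_j(N,A')$. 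These are precisely the two conclusions demanded by $OP$, and since the chosen instance was arbitrary, $f$ satisfies order preservation.

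There is no genuine obstacle here: the substance of the corollary is merely that $SOP$ quantifies the same implication over a strictly larger family of pairs of ranking problems than $OP$ does, so the extra restriction built into $OP$ can only make the property easier to satisfy. The only point worth verifying explicitly is that the two displayed conclusions coincide verbatim and that the condition ``all players have played $m$ matches'' genuinely singles out a subset of the instances on which $SOP$ already guarantees its conclusion.
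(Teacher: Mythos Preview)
Your proposal is correct and is essentially the same observation the paper relies on: $OP$ is just $SOP$ with an additional restriction on the admissible pairs of ranking problems, so any method satisfying $SOP$ automatically satisfies $OP$. The paper treats this as immediate and gives no explicit proof, whereas you have spelled out the trivial domain-inclusion argument in full.
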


It will turn out that the weaker property, order preservation has still unfavourable implications.

\subsection{Relations among the axioms} \label{Sec44}

In this part, some links among symmetry, inversion, independence of irrelevant matches, and (strong) order preservation will be revealed.


\begin{remark} \label{Rem41}
$SYM$ and $OP$ ($SOP$) imply $INV$.
\end{remark}

\begin{proof}
Consider a ranking problem $(N,R,M) \in \mathcal{R}^n$ where $f_i(N,R,M) \geq f_j(N,R,M)$ for players $X_i,X_j \in N$. 
If $f_i(N,-R,M) > f_j(N,-R,M)$, then $f_i(N,O,2M) > f_j(N,O,2M)$ due to $OP$, which contradicts to $SYM$. So $f_i(N,-R,M) \leq f_j(N,-R,M)$ holds.
\end{proof}

It turns out that $IIM$ is also closely connected to $SOP$.

\begin{proposition} \label{Prop41}
A scoring method satisfying $NEU$, $SYM$ and $SOP$ meets $IIM$.
\end{proposition}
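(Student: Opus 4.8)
The plan is to show that replacing the single entry $a_{k\ell}$ amounts to adding (or subtracting) an auxiliary ranking problem in which the two players $X_i,X_j$ under scrutiny are completely idle, and then to let strong order preservation do the work. Write $E = A' - A$, a matrix whose only nonzero entry is $E_{k\ell} = d := a'_{k\ell} - a_{k\ell}$. Since $(N,A)$ and $(N,A')$ agree everywhere except in position $(k,\ell)$, the associated matches numbers satisfy $m'_{k\ell} = m_{k\ell} + d$ with $m_{k\ell},m'_{k\ell}\in\mathbb N$, so $d$ is a nonzero integer. When $d>0$, $E$ is itself a legitimate ranking problem (it records $d$ matches between $X_k$ and $X_\ell$, all won by $X_k$); when $d<0$, the same holds for $-E$. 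Crucially, $X_i$ and $X_j$ have no matches at all in $E$, so the permutation exchanging them fixes $E$, and neutrality (Axiom~\ref{Axiom32}) forces $f_i(N,E)=f_j(N,E)$. It is precisely here that the \emph{strong} form of order preservation (Axiom~\ref{Axiom45}) is indispensable: $E$ has players with different numbers of matches, so the plain $OP$ (Axiom~\ref{Axiom44}), which requires equal match counts, could not be applied.

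For the increase case $d>0$ I would argue directly. Assuming $f_i(N,A)\ge f_j(N,A)$ and recalling $f_i(N,E)=f_j(N,E)$, the non-strict part of $SOP$ applied to $(N,A)$ and $(N,E)$ yields $f_i(N,A+E)\ge f_j(N,A+E)$; since $A+E=A'$, this is exactly the desired conclusion $f_i(N,A')\ge f_j(N,A')$.

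The decrease case $d<0$ is the main obstacle, because one cannot reach $A'$ from $A$ by \emph{adding} a legitimate ranking problem, and $SOP$ only combines problems additively; inversion does not rescue this, since negating the results (Corollary~\ref{Col41}) or relabelling $X_k,X_\ell$ leaves the sign of the match-count change $d$ untouched. I would instead argue by contradiction using the strict part of $SOP$, reading the aggregation in the opposite direction. Put $\tilde E=-E$, a valid problem with $\tilde E_{k\ell}=-d>0$ and $A=A'+\tilde E$, with again $f_i(N,\tilde E)=f_j(N,\tilde E)$ by neutrality. Suppose $f_i(N,A)\ge f_j(N,A)$ but, contrary to the claim, $f_i(N,A')<f_j(N,A')$, i.e. $f_j(N,A')>f_i(N,A')$. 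Applying the strict clause of $SOP$ to the pair $(X_j,X_i)$ and to $(N,A'),(N,\tilde E)$—both weak premises hold and one is strict—gives $f_j(N,A'+\tilde E)>f_i(N,A'+\tilde E)$, that is $f_j(N,A)>f_i(N,A)$, contradicting $f_i(N,A)\ge f_j(N,A)$. Hence $f_i(N,A')\ge f_j(N,A')$, which closes this case.

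Two remarks on the bookkeeping. The integrality observation $d\in\mathbb Z\setminus\{0\}$ is what guarantees that $E$ (or $-E$) is an admissible element of $\mathcal R^n$, so no separate treatment of fractional changes is needed. The argument as organised above leans on $NEU$ and $SOP$; symmetry enters only indirectly, since by Remark~\ref{Rem41} the combination of $SYM$ and $SOP$ supplies $INV$, which one may alternatively use to streamline the reduction and which keeps the hypotheses parallel to the companion results. Finally, because the roles of $A$ and $A'$ in $IIM$ are interchangeable, the two cases together in fact establish the equivalence $f_i(N,A)\ge f_j(N,A)\iff f_i(N,A')\ge f_j(N,A')$, slightly stronger than the one-directional implication demanded in Axiom~\ref{Axiom43}.
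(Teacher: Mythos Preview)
Your proof is correct and takes a cleaner, more direct route than the paper's. Where you decompose $A'=A+E$ (respectively $A=A'+\tilde E$) and apply $SOP$ once after using $NEU$ to show that $X_i$ and $X_j$ are tied in the auxiliary problem $E$, the paper instead argues by contradiction: it combines $INV$ (obtained from $SYM$ and $SOP$ via Remark~\ref{Rem41}) with the $i\leftrightarrow j$ permutation to assemble \emph{four} ranking problems---$(N,R,M)$, $(N,-R,M)$, $\sigma(N,R,M)$, $\sigma(N,-R',M')$---whose sum has zero results matrix, so that $SYM$ forces $f_i=f_j$ on the sum while $SOP$ forces a strict inequality. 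Your single-step aggregation is more transparent, and the observation that $d\in\mathbb Z\setminus\{0\}$ (hence $E$ or $-E$ is a legitimate ranking problem) is exactly the right bookkeeping to make it work.

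In fact your argument never uses $SYM$ at all---your closing remark that it ``enters only indirectly'' undersells this---so you have actually established the stronger implication $NEU+SOP\Rightarrow IIM$. This is in tension with item~2 of Lemma~\ref{Lemma41}, whose proposed witness $f_i(N,R,M)=\max\{\sum_k m_{jk}:X_j\neq X_i\}$ is claimed to satisfy $NEU$ and $SOP$ but not $IIM$; a short check (for instance $n=3$ with two matches $X_1$--$X_2$ in the first problem and two matches $X_1$--$X_3$ in the second) shows that this method already violates even the non-strict clause of $SOP$, so the paper's counterexample is defective and your sharper conclusion stands.
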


\begin{proof}
Assume to the contrary, and let $(N,R,M) \in \mathcal{R}^n$ be a ranking problem, $f: \mathcal{R}^n \to \mathbb{R}^n$ be a scoring method satisfying $NEU$, $SYM$, and $SOP$, and $X_i, X_j, X_k, X_\ell \in N$ be four different players such that $f_i(N,R,M) \geq f_j(N,R,M)$, and $(N,R',M') \in \mathcal{R}^n$ is identical to $(N,R,M)$ except for the result $r'_{k \ell}$ and number of matches $m'_{k \ell}$ between players $X_k$ and $X_\ell$, where $f_i(N,R',M') < f_j(N,R',M')$.

According to Remark~\ref{Rem41}, $f$ satisfies $INV$, hence $f_i(N,-R,M) \leq f_j(N,-R,M)$.
Denote by $\sigma: N \rightarrow N$ the permutation $\sigma(X_i) = X_j$, $\sigma(X_j) = X_i$, and $\sigma(X_k) = X_k$ for all $X_k \in N \setminus \{ X_i,X_j \}$. Neutrality leads to in $f_i \left[ \sigma(N,R,M) \right] \leq f_j \left[ \sigma(N,R,M) \right]$, and $f_i \left[ \sigma(N,-R',M') \right] < f_j \left[ \sigma(N,-R',M') \right]$ due to inversion and Corollary~\ref{Col41}. With the notations $R'' = \sigma(R) - \sigma(R') - R + R' = O$ and $M'' = \sigma(M) + \sigma(M') + M + M'$, we get
\[
(N,R'',M'') = \sigma(N,R,M) + \sigma(N,-R',M') + (N,-R,M) + (N,R',M').
\]
Symmetry implies $f_i(N,R'',M'') = f_j(N,R'',M'')$ since $R'' = O$, but $f_i(N,R'',M'') < f_j(N,R'',M'')$ from strong order preservation, which is a contradiction. 
\end{proof}


It remains to be seen whether $NEU$, $SYM$, and $SOP$ are all necessary for Proposition~\ref{Prop41}.

\begin{lemma} \label{Lemma41}
$NEU$, $SYM$, and $SOP$ are logically independent axioms with respect to the implication of $IIM$.
\end{lemma}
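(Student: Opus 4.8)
The plan is to establish logical independence by exhibiting, for each of the three axioms, a scoring method that satisfies the other two but violates $IIM$; since Proposition~\ref{Prop41} shows that $NEU$, $SYM$ and $SOP$ together imply $IIM$, any such method is automatically forced to violate the omitted axiom. Throughout I assume $n \geq 4$, as $IIM$ is vacuous otherwise. Two of the three witnesses can be produced directly, while the third is the delicate one.

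To show that $NEU$ cannot be dropped I would take the \emph{linear} method $f_i(N,R,M) = \sum_{X_k \in N} r_{ik} + i \cdot r_{12}$, i.e. the net score of $X_i$ plus a label-dependent multiple of the fixed remote result $r_{12}$ between $X_1$ and $X_2$. Being linear in the entries of $A$, it satisfies $f(N,A+A') = f(N,A) + f(N,A')$ and hence $SOP$; since $R = O$ forces every term to vanish, it satisfies $SYM$; and because the coefficient of the remote entry $r_{12}$ differs across the players $X_3, X_4, \dots$, altering that single match reverses the pairwise order of two of them, so $IIM$ (and $NEU$) fail. To show that $SOP$ cannot be dropped I would invoke a global method such as least squares or the generalized row sum \citep{Gonzalez-DiazHendrickxLohmann2013}: each is neutral, each returns equal scores when $R=O$ (so each satisfies $SYM$, indeed $INV$), and each is known to violate $IIM$ because a player's rating is coupled to the whole comparison graph. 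As such a method meets $NEU$ and $SYM$ but not $IIM$, Proposition~\ref{Prop41} guarantees it cannot satisfy $SOP$, which is exactly what is needed.

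The hard case, and the main obstacle, is showing that $SYM$ cannot be dropped: here I need a method satisfying $NEU$ and $SOP$ yet violating $IIM$ (so, by Proposition~\ref{Prop41}, necessarily violating $SYM$). The difficulty is structural. Any method that is additive in $A$ and neutral must, by neutrality, assign to the entry $a_{k\ell}$ a coefficient depending only on whether $k$ or $\ell$ coincides with the player being scored; collecting terms shows that such a method reduces, for ranking purposes, to a fixed combination of the row and column sums of $A$, which depend only on the scored player's own matches -- hence it satisfies $IIM$. Consequently any witness for this case must be \emph{non-additive}, and it is precisely non-additivity that threatens $SOP$, whose defining implication behaves well under the summation $A + A'$ only for additive scores.

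I would therefore construct a bespoke non-additive neutral method, ranking players primarily by an additive own-performance quantity (for instance their number of matches, which already breaks $SYM$ when all results are draws) and grafting on a remote-dependent ingredient that breaks $IIM$, with the entire effort concentrated on arranging the induced family of weak orders to be closed under problem addition in the strong sense demanded by $SOP$. Verifying $SOP$ \emph{exactly} for this non-additive witness -- rather than merely approximately or generically, where the cross terms arising from $A + A'$ tend to destroy it -- is where the real work lies. Once the three methods are in hand, the logical independence of $NEU$, $SYM$ and $SOP$ relative to the implication of $IIM$ follows immediately from Proposition~\ref{Prop41}.
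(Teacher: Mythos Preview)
Your witnesses for two of the three cases are sound. For the $NEU$-dropped case your additive, label-dependent method $f_i=\sum_k r_{ik}+i\cdot r_{12}$ works exactly as you say; the paper uses a close cousin, the net score of the \emph{previous} player $f_i=\sum_j r_{i-1,j}$ (indices cyclic). For the $SOP$-dropped case the paper writes down the explicit ``opponents' aggregated net score'' $f_i=\sum_{X_j\in O_i}\sum_k r_{jk}$, which is more self-contained than your appeal to least squares / generalized row sum, but your argument via the contrapositive of Proposition~\ref{Prop41} is perfectly valid.

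The genuine gap is the $SYM$-dropped case: you need a method that is neutral and satisfies $SOP$ while violating $IIM$. You correctly observe that any \emph{additive} neutral method collapses, for ranking purposes, to a combination of the scored player's own row and column sums and therefore meets $IIM$; you then announce a programme (``rank primarily by an own-performance quantity and graft on a remote-dependent ingredient'') but never carry it out. As written this is not a proof.

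For comparison, the paper disposes of this case in one line with the candidate
\[
f_i(N,R,M)=\max\Bigl\{\textstyle\sum_{k} m_{jk}:X_j\neq X_i\Bigr\},
\]
the maximal number of matches played by any \emph{other} player. It is neutral and it violates both $SYM$ and $IIM$. You should be aware, however, that the paper does not verify $SOP$ for this method, and your structural worry is in fact justified: with $n=3$, take $(N,A)$ with $m_{12}=2$, $m_{13}=m_{23}=0$ and $(N,A')$ with $m'_{13}=2$, $m'_{12}=m'_{23}=0$. In each problem all three scores equal $2$, so $f_1\geq f_2$ holds in both; yet in the sum one gets $d_1=4$, $d_2=d_3=2$, hence $f_1=2<4=f_2$, and $SOP$ fails. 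So neither your proposal nor the paper's stated example actually closes this case, and an explicit construction satisfying $NEU$ and $SOP$ but not $IIM$ is still required.
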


\begin{proof}
It is shown that there exist scoring methods, which satisfy exactly two properties from the set $NEU$, $SYM$, and $SOP$, but violate the third and does not meet $IIM$, too:
\begin{enumerate}[label=\fbox{\arabic*}]
\item
$SYM$ and $SOP$: the sum of the results of the 'previous' player, $f_i(N,R,M) = \sum_{j=1}^n r_{i-1,j}$ for all $X_i \in N \setminus \{ X_1 \}$ and $f_1(N,R,M) = \sum_{j=1}^n r_{n,j}$;
\item
$NEU$ and $SOP$: maximal number of matches of other players, $f_i(N,R,M) = \max \{ \sum_{k=1}^n m_{jk}: X_j \neq X_i \}$;\footnote{~It is worth to note that the maximal number of own matches satisfies $NEU$, $SOP$, and $IIM$.}
\item
$NEU$ and $SYM$: aggregated sum of the results of opponents, $f_i(N,R,M) = \sum_{X_j \in O_i} \sum_{k=1}^n r_{jk}$.
\end{enumerate}
\end{proof}

Proposition~\ref{Prop41} helps in deriving another impossibility statement.

\begin{proposition} \label{Prop42}
There exists no scoring method that satisfies neutrality, symmetry, strong order preservation and self-consistency.
\end{proposition}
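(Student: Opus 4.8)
The plan is to proceed by contradiction and to lean on Proposition~\ref{Prop41}. Suppose some scoring method $f: \mathcal{R}^n \to \mathbb{R}^n$ satisfied $NEU$, $SYM$, $SOP$ and $SC$ at once. Because $f$ would then satisfy $NEU$, $SYM$ and $SOP$, Proposition~\ref{Prop41} would make it meet $IIM$ as well, so the whole burden of the proof reduces to exhibiting one ranking problem on which $SYM$, $IIM$ and $SC$ clash. I would therefore search for a configuration in which symmetry pins down an equality between two scores, $IIM$ forces that equality to survive a single remote perturbation, yet $SC$ turns the same comparison strict after the perturbation.

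Concretely, I would use five players $X_1, \dots, X_5$ (padding with matchless players for larger $n$) and take as baseline the ranking problem $(N,A)$ whose only games are four draws: $X_1$ with $X_3$, $X_2$ with $X_4$, $X_3$ with $X_5$, and $X_4$ with $X_5$. Since $R=O$ here, $SYM$ delivers $f_1(N,A)=f_2(N,A)$. Let $(N,A')$ arise from $(N,A)$ by the single remote modification of turning the drawn game between $X_3$ and $X_5$ into a win of $X_3$ over $X_5$. As this altered matchup involves neither $X_1$ nor $X_2$, invoking $IIM$ with $(i,j,k,\ell)=(1,2,3,5)$ in both directions would carry the equality over: $f_1(N,A')=f_2(N,A')$.

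The contradiction would then come from applying $SC$ twice on $(N,A')$, read as a one-round problem whose strengths are the scores $f_k(N,A')$. Comparing $X_3$ with $X_4$ first: the opponent multisets $O_3=\{X_1,X_5\}$ and $O_4=\{X_2,X_5\}$ admit the correspondence $X_1\leftrightarrow X_2$, $X_5\leftrightarrow X_5$, under which $t_{31}=t_{42}$ and $t_{35}>t_{45}$, while the strength conditions $f_1(N,A')\geq f_2(N,A')$ and $f_5(N,A')\geq f_5(N,A')$ both hold; $SC$ would then yield $f_3(N,A')>f_4(N,A')$. Comparing $X_1$ with $X_2$ next: the singleton multisets $O_1=\{X_3\}$ and $O_2=\{X_4\}$ give the correspondence $X_3\leftrightarrow X_4$ with $t_{13}=t_{24}$ and the strict strength inequality $f_3(N,A')>f_4(N,A')$, so $SC$ would force $f_1(N,A')>f_2(N,A')$, contradicting the equality secured above.

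The step I expect to be the crux --- and the reason for inserting the auxiliary player $X_5$ --- is the self-referential nature of $SC$. Had the asymmetry between $X_3$ and $X_4$ been created by a game played directly between them, the correspondence needed to conclude $f_3(N,A')>f_4(N,A')$ would send $X_4$ to $X_3$ and demand $f_4(N,A')\geq f_3(N,A')$, precisely the inequality one wants to deny, so $SC$ could not be triggered. Channelling the perturbation through the shared third opponent $X_5$ is what breaks this circularity and lets $f_3(N,A')>f_4(N,A')$ be obtained unconditionally before it is propagated to $X_1$ and $X_2$. Beyond this, the only things to watch are that the perturbed matchup is genuinely remote, so that $IIM$ applies, and that the baseline is entirely drawn, so that $SYM$ applies --- both of which the configuration above arranges.
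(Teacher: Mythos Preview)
Your proof is correct and shares its opening move with the paper: both invoke Proposition~\ref{Prop41} to deduce $IIM$ from $NEU$, $SYM$ and $SOP$. The paper then simply appeals to \citet[Theorem~3.1]{Csato2018f} for the incompatibility of $IIM$ and $SC$, whereas you supply a self-contained five-player construction deriving a contradiction from $SYM$, $IIM$ and $SC$ together. Your route is thus more elementary in that it avoids the external reference, at the modest cost of using $SYM$ once more in the final step (the cited result needs only $IIM$ and $SC$) and of requiring $n\geq 5$. The device you single out as the crux---channelling the asymmetry between $X_3$ and $X_4$ through a common opponent $X_5$, so that $SC$ yields $f_3>f_4$ unconditionally before it is propagated to $f_1>f_2$---is precisely the mechanism behind the cited impossibility, so in spirit the two arguments coincide. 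One small remark: your ``single remote modification'' alters both $a_{35}$ and $a_{53}$, while Axiom~\ref{Axiom43} as literally worded mentions only $a_{k\ell}$; this is harmless here, since the paper itself reads $IIM$ that way in the proof of Proposition~\ref{Prop41} (allowing both $r'_{k\ell}$ and $m'_{k\ell}$ to differ).
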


\begin{proof}
According to Proposition~\ref{Prop41}, $NEU$, $SYM$ and $SOP$ imply $IIM$. 
\citet[Theorem~3.1]{Csato2018f} has shown that $IIM$ and $SC$ cannot be met at the same time.
\end{proof}

\subsection{A basic impossibility result} \label{Sec45}

The four axioms of Proposition~\ref{Prop42} are not independent despite Lemma~\ref{Lemma41}. 
However, a much stronger statement can be obtained by eliminating neutrality and symmetry, which also allows for a weakening of strong order preservation by using order preservation. Note that substituting an axiom with a weaker one in an impossibility statement leads to a stronger result.

We will use a generalized tournament with four players for this purpose.

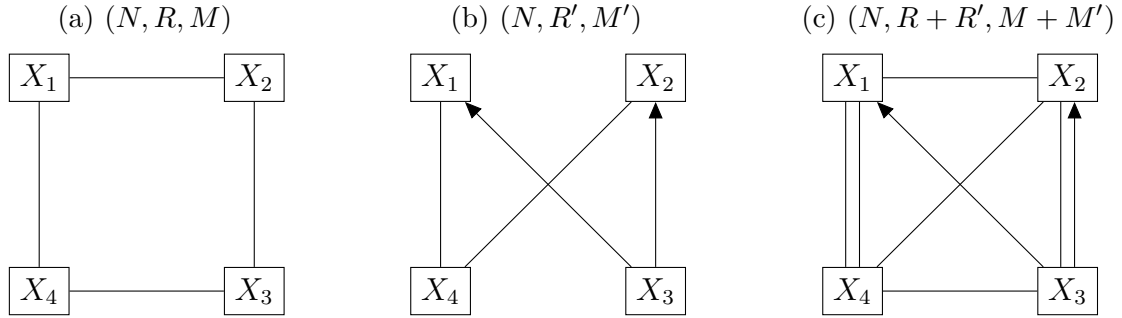
\begin{figure}[htbp]
\centering
\caption{The ranking problems of Example~\ref{Examp41}}
\label{Fig41}
  
\begin{subfigure}{.33\textwidth}
  \centering
  \subcaption{$(N,R,M)$}
  \label{Fig41a}
\begin{tikzpicture}[scale=1, auto=center, transform shape, >=triangle 45]
\tikzstyle{every node}=[draw,shape=rectangle]; 
  \node (n1) at (135:2) {$X_1$};
  \node (n2) at (45:2)  {$X_2$};
  \node (n3) at (315:2) {$X_3$};
  \node (n4) at (225:2) {$X_4$};

  \foreach \from/\to in {n1/n2,n1/n4,n2/n3,n3/n4}
    \draw (\from) -- (\to);
\end{tikzpicture}
\end{subfigure}
\begin{subfigure}{.33\textwidth}
  \centering
  \subcaption{$(N,R',M')$}
  \label{Fig41b}
\begin{tikzpicture}[scale=1, auto=center, transform shape, >=triangle 45]
\tikzstyle{every node}=[draw,shape=rectangle];
  \node (n1) at (135:2) {$X_1$};
  \node (n2) at (45:2)  {$X_2$};
  \node (n3) at (315:2) {$X_3$};
  \node (n4) at (225:2) {$X_4$};

  \foreach \from/\to in {n1/n4,n2/n4}
    \draw (\from) -- (\to);
  \draw [->] (n3) -- (n1);
  \draw [->] (n3) -- (n2);
\end{tikzpicture}
\end{subfigure}
\begin{subfigure}{.33\textwidth}
	\centering
	\caption{$(N,R+R',M+M')$}
	\label{Fig41c}
\begin{tikzpicture}[scale=1, auto=center, transform shape, >=triangle 45]
\tikzstyle{every node}=[draw,shape=rectangle];
  \node (n1) at (135:2) {$X_1$};
  \node (n2) at (45:2)  {$X_2$};
  \node (n3) at (315:2) {$X_3$};
  \node (n4) at (225:2) {$X_4$};

  \foreach \from/\to in {n1/n2,n2/n4,n3/n4}
    \draw (\from) -- (\to);
  \draw [->] (n3) -- (n1);
\draw[transform canvas={xshift=-0.5ex}](n2) -- (n3);
\draw[->,transform canvas={xshift=0.5ex}](n3) -- (n2);
\draw[transform canvas={xshift=-0.5ex}](n1) -- (n4);
\draw[transform canvas={xshift=0.5ex}](n1) -- (n4);
\end{tikzpicture}
\end{subfigure}
\end{figure}

\begin{example} \label{Examp41}
Let $(N,R,M), (N,R',M') \in \mathcal{R}^4$ be two ranking problems.
They are shown in Figure~\ref{Fig41}: in the first tournament described by $(N,R,M)$, matches between players $X_1$ and $X_2$, $X_1$ and $X_4$, $X_2$ and $X_3$, $X_3$ and $X_4$ all resulted in draws (see Figure~\ref{Fig41a}). On the other side, in the second tournament, described by $(N,R',M')$, players $X_1$ and $X_2$ have lost against $X_3$ and drawn against $X_4$ (see Figure~\ref{Fig41b}).
The two ranking problems can be summed in $(N,R'',M'') \in \mathcal{R}^4$ such that $R'' = R + R'$ and $M'' = M + M'$ (see Figure~\ref{Fig41c}).
\end{example}

\begin{theorem} \label{Theo41}
There exists no scoring method that satisfies order preservation and self-consistency.
\end{theorem}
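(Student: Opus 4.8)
The plan is to work entirely inside the two ranking problems of Example~\ref{Examp41}, in which every player contests exactly two matches in each of $(N,R,M)$ and $(N,R',M')$, so that order preservation (Axiom~\ref{Axiom44}) is applicable to their sum $(N,R+R',M+M')$. The strategy has three stages: first pin down the pairwise rankings that self-consistency (Axiom~\ref{Axiom33}) forces in each component; then transport those relations to the aggregated problem by order preservation; and finally re-apply self-consistency directly to the aggregate to manufacture a strict pairwise inequality that contradicts an equality delivered by order preservation.

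For the first component $(N,R,M)$, a four-cycle of draws, I would observe that $X_1$ and $X_3$ face the identical opponent set $\{X_2,X_4\}$ with identical (drawn) outcomes, so the identity correspondence and self-consistency give $f_1=f_3$, and symmetrically $f_2=f_4$. A short bootstrap then collapses all four scores: assuming w.l.o.g. $f_1\ge f_2$, the correspondence $X_1\mapsto X_2,\ X_3\mapsto X_4$ from $O_2$ onto $O_1$ satisfies both the result and the opponent-score hypotheses, so self-consistency returns $f_2\ge f_1$, whence $f_1=f_2=f_3=f_4$ in $(N,R,M)$. For the second component $(N,R',M')$, exchanging $X_1$ and $X_2$ is a symmetry, giving $f_1=f_2$, while $X_3$ beats both members of its opponent set $\{X_1,X_2\}$ that $X_4$ merely draws, so self-consistency yields the strict inequality $f_3>f_4$ there. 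Since every player plays the same number of matches in both components and the first component is score-flat, order preservation transfers exactly these relations to the sum: $f_1=f_2$ and $f_3>f_4$ in $(N,R+R',M+M')$.

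The decisive and most delicate step is re-applying self-consistency to the aggregate. There $X_1$ and $X_2$ each record exactly one loss (to $X_3$) and three draws, so they are indistinguishable by raw score, yet their opposition differs. I would build an explicit one-to-one correspondence from $O_2$ onto $O_1$ that matches the loss of $X_2$ against $X_3$ with the loss of $X_1$ against $X_3$, sends the drawn instance of $X_2$ against $X_3$ onto a draw of $X_1$ against $X_4$, and pairs the two remaining draws using $f_1=f_2$. Every result comparison is then an equality, and the one strict gain is the opponent-score comparison $f_3>f_4$ created by the $X_3$-versus-$X_4$ pairing; self-consistency consequently forces $f_2>f_1$ in the aggregate, contradicting the equality $f_1=f_2$ secured by order preservation, so no scoring method can meet both axioms. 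The main obstacle I anticipate is the match-level bookkeeping for the two repeated pairs $(X_1,X_4)$ and $(X_2,X_3)$: one must verify that the aggregated results matrix of Figure~\ref{Fig41c} is realised by the precise per-match outcomes used in the correspondence, and that strictness can enter only through the opponent-strength comparison -- any attempt to make a result comparison strict would force a loss of $X_2$ to be matched with a draw of $X_1$, violating result dominance -- so that it is exactly the transported relation $f_3>f_4$ that overturns the tie demanded by order preservation.
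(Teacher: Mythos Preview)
Your proposal is correct and follows essentially the same route as the paper's own proof: both work in Example~\ref{Examp41}, use self-consistency to force a flat ranking in $(N,R,M)$ and the relations $f_1=f_2$, $f_3>f_4$ in $(N,R',M')$, transport these to the sum via order preservation, and then re-apply self-consistency to the aggregate with the correspondence $X_1\mapsto X_2$, $X_3\mapsto X_4$ (first component) and the identity on $\{X_3,X_4\}$ (second component) to force $f_2>f_1$, contradicting $f_1=f_2$. Your match-level description of that correspondence coincides exactly with the paper's per-round one once the aggregate is realised by the two component ``rounds''; you are right to flag that bookkeeping, since Axiom~\ref{Axiom33} is stated round by round. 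One small wording caution: when you write ``exchanging $X_1$ and $X_2$ is a symmetry'' for $(N,R',M')$, make sure you argue this through self-consistency (identical opponent sets $\{X_3,X_4\}$ with identical outcomes) rather than through neutrality, which is not assumed in the theorem.
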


\begin{proof}
Assume to the contrary that there exists a self-consistent scoring method $f: \mathcal{R}^n \to \mathbb{R}^n$ satisfying order preservation.
Consider Example~\ref{Examp41}.

\begin{enumerate}[label=\emph{\Roman*}.]
\item
Take the ranking problem $(N,R,M)$. Note that $O_1 = O_3 = \{ X_2, X_4 \}$ and $O_2 = O_4 = \{ X_1, X_3 \}$.
\begin{enumerate}[label=\emph{\alph*})]
\item
Consider the identity one-to-one correspondences $h_{13}: O_1 \leftrightarrow O_3$ and $h_{31}: O_3 \leftrightarrow O_1$ such that $h_{13}(X_2) = h_{31}(X_2) = X_2$ and $h_{13}(X_4) = h_{31}(X_4) = X_4$. Since $r_{12} = r_{32} = 0$ and $r_{14} = r_{34} = 0$, players $X_1$ and $X_3$ have the same results against the same opponents, hence $f_1(N,R,M) = f_3(N,R,M)$ from $SC$.

\item
Consider the identity one-to-one correspondences $h_{24}: O_2 \leftrightarrow O_4$ and $h_{42}: O_4 \leftrightarrow O_2$. Since $r_{21} = r_{41} = 0$ and $r_{23} = r_{43} = 0$, players $X_2$ and $X_4$ have the same results against the same opponents, hence $f_2(N,R,M) = f_4(N,R,M)$ from $SC$.

\item
Suppose that $f_2(N,R,M) > f_1(N,R,M)$, which implies $f_4(N,R,M) > f_3(N,R,M)$.
Consider the one-to-one mapping $h_{12}: O_1 \leftrightarrow O_2$, where $h_{12}(X_2) = X_1$ and $h_{12}(X_4) = X_3$. Since $r_{12} = r_{21} = 0$ and $r_{14} = r_{23} = 0$, player $X_1$ has the same results against stronger opponents compared to $X_2$, hence $f_1(N,R,M) > f_2(N,R,M)$ from $SC$, which is a contradiction.

\item
An analogous argument shows that $f_1(N,R,M) > f_2(N,R,M)$ cannot hold.
\end{enumerate}
Therefore, self-consistency leads to $f_1(N,R,M) = f_2(N,R,M) = f_3(N,R,M) = f_4(N,R,M)$ in the first ranking problem.

\item
Take the ranking problem $(N,R',M')$. Note that $O_1' = O_2' = \{ X_3, X_4 \}$ and $O_3' = O_4' = \{ X_1, X_2 \}$.
\begin{enumerate}[label=\emph{\alph*})]
\item
Consider the identity one-to-one correspondences $h_{12}': O_1' \leftrightarrow O_2'$ and $h_{21}': O_2' \leftrightarrow O_1'$. Since $r_{13}' = r_{23}' = -1$ and $r_{14}' = r_{24}' = 0$, players $X_1$ and $X_2$ have the same results against the same opponents, hence $f_1(N,R',M') = f_2(N,R',M')$ from $SC$.

\item
Consider the identity one-to-one correspondence $h_{34}': O_3' \leftrightarrow O_4'$. Since $1 = r_{31}' > r_{41}' = 0$ and $1 = r_{32}' > r_{42}' = 0$, player $X_3$ has better results against the same opponents compared to $X_4$, hence $f_3(N,R',M) > f_4(N,R',M)$ from $SC$.
\end{enumerate}
Thus self-consistency leads to $f_1(N,R',M') = f_2(N,R',M')$ and $f_3(N,R',M') > f_4(N,R',M')$ in the second ranking problem.

\item
Take the sum of these two ranking problems, the ranking problem $(N,R'',M'')$.

Suppose that $f_1(N,R'',M'') \geq f_2(N,R'',M'')$. Consider the one-to-one mappings $g_{21}: O_2 \leftrightarrow O_1$ and $g_{21}': O_2' \leftrightarrow O_1'$ such that $g_{21}(X_1) = X_2$, $g_{21}(X_3) = X_4$ and $g_{21}'(X_3) = X_3$, $g_{21}'(X_4) = X_4$.
Since $r_{21} = r_{12} = 0$, $r_{23} = r_{14} = 0$ and $r_{23}' = r_{13}' = -1$, $r_{24}' = r_{14}' = 0$, player $X_2$ has the same results against stronger opponents compared to $X_1$, hence $f_2(N,R'',M'') > f_1(N,R'',M'')$ from $SC$, which leads to a contradiction.

To summarize, self-consistency results in $f_1(N,R'',M') < f_2(N,R'',M'')$, however, order preservation implies $f_1(N,R'',M'') = f_2(N,R'',M'')$ as all players have played two matches in $(N,R',M')$ and $(N,R',M')$, respectively, which is impossible.
\end{enumerate}

Therefore, it has been derived that no scoring method can meet $OP$ and $SC$ simultaneously on the universal domain of $\mathcal{R}^n$.
\end{proof}

Theorem~\ref{Theo41} is a serious negative result: by accepting self-consistency, the ranking method cannot be required to preserve two players' pairwise ranking when some ranking problems, where all players have played the same number of matches, are aggregated.

\begin{figure}[htbp]
\centering
\caption{The ranking problem of Example~\ref{Examp42}}
\label{Fig42}
\begin{tikzpicture}[scale=1, auto=center, transform shape, >=triangle 45]
\tikzstyle{every node}=[draw,shape=rectangle];
  \node (n1) at (135:2) {$X_1$};
  \node (n2) at (45:2)  {$X_2$};
  \node (n3) at (315:2) {$X_3$};
  \node (n4) at (225:2) {$X_4$};

  \foreach \from/\to in {n1/n2,n2/n3,n3/n4}
    \draw (\from) -- (\to);
\end{tikzpicture}
\end{figure}
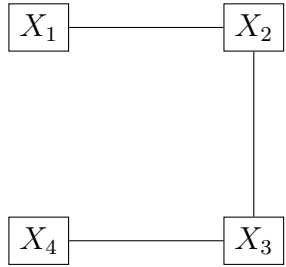

\begin{example} \label{Examp42}
Let $(N,R,M) \in \mathcal{R}^4$ be the ranking problem in Figure~\ref{Fig42}: $X_1$ has drawn against $X_2$, $X_2$ against $X_3$ and $X_3$ against $X_4$.
\end{example}

Theorem~\ref{Theo41} would be more straightforward as a strengthening of Proposition~\ref{Prop42} if self-consistency implies neutrality and/or symmetry. However, it is not the case as the following result holds.

\begin{remark} \label{Rem42}
There exists a scoring method that is self-consistent, but not neutral and symmetric.
\end{remark}

\begin{proof}
The statement can be verified by an example where an $SC$-compatible scoring method violates $NEU$ and $SYM$.

Consider Example~\ref{Examp42} with a scoring method $f$ such that $f_1(N,R,M) > f_2(N,R,M) > f_3(N,R,M) > f_4(N,R,M)$, for example, player $X_i$ gets the score $4-i$. $f$ meets self-consistency since $X_1$ has the same result against a stronger opponent compared to $X_4$, while there exists no correspondence between opponent sets $O_2$ and $O_3$ satisfying the conditions of $SC$.

Let $\sigma: N \to N$ be a permutation such that $\sigma(X_1) = X_4$, $\sigma(X_2) = X_3$, $\sigma(X_3) = X_2$, and $\sigma(X_4) = X_1$. Since $\sigma(N,R,M) = (N,R,M)$, $NEU$ implies $f_4(N,R,M) > f_1(N,R,M)$ and $f_3(N,R,M) > f_2(N,R,M)$, a contradiction.
Furthermore, $SYM$ leads to $f_1(N,R,M) = f_2(N,R,M) = f_3(N,R,M) = f_4(N,R,M)$, another impossibility.
Therefore there exists a self-consistent scoring method, which is not neutral and symmetric.
\end{proof}

\section{Conclusions} \label{Sec5}

We have found some unexpected implications of different properties in the case of generalized tournaments where the players should be ranked on the basis of their match results against each other.
First, self-consistency prohibits the use of individual scoring methods, that is, scores cannot be derived before the aggregation of tournament rounds (Proposition~\ref{Prop31}).
Second, independence of irrelevant matches (posing a kind of independence concerning the pairwise ranking of two players) follows from three axioms, neutrality (independence of relabelling the players), symmetry (implying a flat ranking if all aggregated comparisons are draws), and strong order preservation (perhaps the most natural property concerning the aggregation of ranking problems).
According to \citet{Csato2018f}, there exists no scoring method satisfying self-consistency and independence of irrelevant matches, hence Proposition~\ref{Prop41} implies that neutrality, symmetry, strong order preservation and self-consistency cannot be met simultaneously (Proposition~\ref{Prop42}). It even turns out that self-consistency and a weaker version of strong order preservation are still enough to derive this negative result (Theorem~\ref{Theo41}), consequently, one should choose between these two natural fairness requirements.

What do our results say to practitioners who want to rank players or teams?
First, self-consistency does not allow to rank them in individual rounds, one has to wait until all tournament results are known and can be aggregated.
Second, self-consistency is not compatible with order preservation on this universal domain. It is not an unexpected and counter-intuitive result as, according to \citet{Gonzalez-DiazHendrickxLohmann2013}, a number of ranking methods violate order preservation. We have proved that there is no hope to find a reasonable scoring method with this property.
From a more abstract point of view, breaking of order preservation in tournament ranking is a version of \href{https://en.wikipedia.org/wiki/Simpson\%27s_paradox}{Simpson's paradox}, a phenomenon in probability and statistics, in which a trend appears in different groups of data but disappears or reverses when these groups are combined.\footnote{~We are grateful to an anonymous referee for this remark.}
This negative result holds despite self-consistency is somewhat weaker than our intuition suggests: it does not imply neutrality and symmetry, so even a self-consistent ranking of players may depend on their names and without ties if all matches are drawn (Remark~\ref{Rem42}).
Third, losing the simplicity provided by order preservation certainly does not facilitate the axiomatic construction of scoring methods.

Consequently, while sacrificing self-consistency or order preservation seems to be unavoidable in our general setting, an obvious continuation of the current research is to get positive possibility results by some domain restrictions or further weakening of the axioms.
It is also worth to note that the incompatibility of the two axioms does not imply that any scoring method is always going to work badly, but all can lead to problematic results at times.


\section*{Acknowledgements}
\addcontentsline{toc}{section}{Acknowledgements}
\noindent
We are grateful to \emph{S\'andor Boz\'oki} for useful advice. \\
Anonymous reviewers provided valuable comments and suggestions on earlier drafts. \\
The research was supported by OTKA grant K 111797 and by the MTA Premium Post Doctorate Research Program. 


\end{document}